\newtheorem{theorem}{Theorem}
\newtheorem{lemma}{Lemma}
\newtheorem{corollary}{Corollary}
\begin{document}

\title{Optimal Sampling under Cost for Remote Estimation of the Wiener Process over a Channel with Delay}

\author{Orhan T Yavaşcan,~\IEEEmembership{}
        Süleyman Çıtır,~\IEEEmembership{}
        and~Elif Uysal,~\IEEEmembership{Fellow,~IEEE}
        \thanks{O. Yavaşcan was with METU, Ankara, Turkey S. Çıtır and E. Uysal are with Freshdata Technologies, Ankara, Turkey, and with METU. This work was supported in part by TUBITAK Grant 22AG019.}
        {\\
        Department of Electrical and Electronics Engineering, Middle East Technical University, Ankara, Türkiye\\
        orhan.yavascan@metu.edu.tr suleyman.citir@metu.edu.tr, uelif@metu.edu.tr
        }
}

\maketitle

\markboth{Journal of \LaTeX\ Class Files,~Vol.~14, No.~8, August~2021}
{Shell \MakeLowercase{\textit{et al.}}: A Sample Article Using IEEEtran.cls for IEEE Journals}

\begin{abstract}
We address the optimal sampling of a Wiener process under sampling and transmission costs, with the samples being forwarded to a remote estimator over a channel with IID delay. The goal of the estimator is to reconstruct the real-time signal by minimizing a long-term average cost that includes both the mean squared estimation error (MSE) and the costs associated with sampling and transmission from causally received samples. Rather than pursuing the conventional MMSE estimate, our objective is to derive a policy that optimally balances estimation accuracy and resource expenditure, yielding an MSE-optimal solution under explicit cost constraints. We look for optimal online strategies for both sampling and transmission. By employing Lagrange relaxation and iterative backward induction, we derive an optimal policy that balances the trade-offs between estimation accuracy and costs. We validate our approach through comprehensive simulations, evaluating various scenarios including balanced costs, high sampling costs, high transmission costs, and different transmission delay statistics. Our results demonstrate the effectiveness and robustness of the proposed joint sampling and transmission policy in maintaining lower MSE compared to conventional periodic sampling methods. The differences are particularly striking under high delay variability. We also analyze the convergence behavior of the cost function. We believe our formulation and results provide insights into the design and implementation of efficient remote estimation systems in stochastic networks.
\end{abstract}

\begin{IEEEkeywords}
Sampling, remote estimation, wiener process, network delay
\end{IEEEkeywords}

%
\IEEEpeerreviewmaketitle

\section{Introduction}
%
%
%

Many real-time control and remote monitoring tasks in industrial automation, cyberphysical systems, autonomous vehicles, healthcare monitoring systems, and smart grids rely on state estimation of unstable processes, based on samples obtained over a communication network. Classically, the communication network is responsible for the reliable and to some extent orderly and timely transport of samples that it has been handed by the application. In the emerging goal-oriented networking paradigm,~\cite{Ari2023} the communication network ceases to be agnostic of the eventual purpose of the data, in other words, the network takes responsibility of delivering timely and useful packets or updates to the task of computation at the destination. In real-time machine type communication, usefulness of the data is influenced by the staleness of the data. Data can get stale waiting in queue at various link, network and transport layer interfaces. Therefore, the timeliness and effectiveness of the information transferred by the network is inherently tied by the delays that occur due to congestion, relaying and forwarding decisions, and link scheduling.  

In the face of such delays, ensuring goal-oriented transmission, that is, transmitting data that will be effective at the point of computation by the time it gets there, starts with controlling the data generation process. That is, the sample generation and transmission processes are informed by the network state. For instance, in smart grids, the status information about power demand, supply conditions, and grid health constantly changes. To ensure efficient, reliable, and safe system performance, controllers need accurate estimates of the grid's current status from nearby sensors~\cite{Eapen2023}. This necessitates a decision policy on when to generate and transmit samples, to minimize estimation error. Designing such an optimal sampling strategy has been addressed in the previous literature~\cite{Elif-remote-est, Tang2023} which assumed that the process can be monitored continuously at the sender side.  

In practical scenarios there is cost. For example, in battery-powered IoT sensor networks, frequent sampling drains the battery quickly (sampling cost) while transmitting each sample over constrained wireless links incurs bandwidth or energy costs (transmission cost). In industrial automation, sensors monitoring machine health or production lines face overhead for each measurement (e.g., sensor wear, resource usage) and incur transmission costs in wired or wireless networks. In healthcare monitoring, wearable sensors must strike a balance between accurate, frequent measurements and preserving battery life or limited bandwidth.

Incorporating practical constraints on the cost of sampling and transmission makes the problem more challenging, but it is important to address this generalization for results that are more relevant to the design of networked sensing architectures.

\begin{figure}[ht]
    \includegraphics[width=0.5\textwidth]{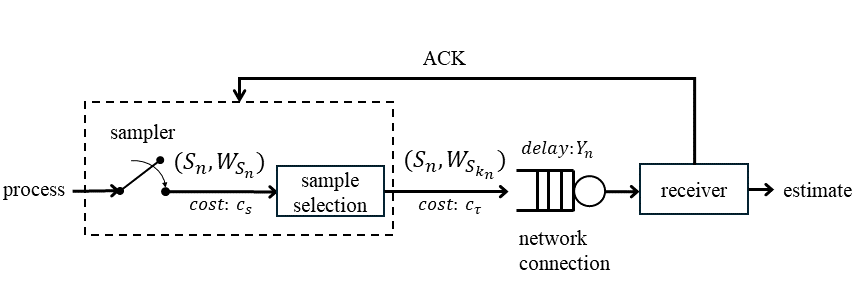}
    \caption{System Model}
    \label{fig:system model}
\end{figure}

In this paper, we study the optimal remote estimation of a Wiener Process from samples that are transmitted through a network connection with random delay, as illustrated in Figure \ref{fig:system model}. The random delay models the packet transmission latency in a networked system, where the estimator has access to the data samples over a wired or wireless network. Delays in such a scenario may be caused by the existence of links with long propagation delay on the path of the data flow, queuing delay due to congestion at various network interfaces, and re-transmissions due to packet drops. 

We will model the source as a Wiener process \(\{W_t, t\in \mathbb{R} \}\), which describes the values of a scalar measured quantity such as displacement, temperature, etc, that has time variation. The Wiener process is a well-known example of a non-stationary stochastic process, characterized by stationary and independent increments. It has broad applications in modeling the state of an unstable system, in engineering and other fields including mathematics, economics, finance and physics. We model a remote estimator which will obtain a minimum mean square error (MMSE) estimate \(\hat{W}_t\) based on causally received samples. We assume that the transmitter sends encoded and time-stamped samples as individual packets\footnote{The design of source and network connection encoders are outside the scope of our treatment.}. In contrast to the previous literature~\cite{Elif-remote-est} on this problem, in this paper we incorporate sampling and transmission costs. In other words, the objective is to find optimal sampling and transmission policies to optimize mean squared error at the estimator, under sampling and transmission costs. In the rest, after a discussion of related work, we will make the problem statement precise and then present our  methodology for finding optimal policies that utilizes Lagrange relaxation and backward induction.

\section{Related Work}
While the optimal sampling of the Wiener process over  a noisy communication channel has been studied in previous literature (eg,~\cite{Rabi2012}) the treatment of delay was first considered in~\cite{Elif-remote-est}. This paper can be considered as an extension of the treatment in~\cite{Elif-remote-est} as it explores optimal sampling of the Wiener process under delays for minimizing mean square estimation error under a sampling frequency constraint, establishing a connection between age of information (AoI) and estimation error. Online sampling of the Wiener process to minimize MSE was also studied in~\cite{tang2022sampling} under unknown transmission delays, proposing an adaptive algorithm with cumulative MSE regret \(O(\ln k)\) and proving its minimax optimality. The sampling of an Ornstein-Uhlenbeck process over a network was considered in\cite{OU-process}. The study showed that without knowledge of the signal (the signal-agnostic solution), minimizing the age of information is optimal, whereas using causal knowledge of signal values can achieve lower estimation errors. The work~\cite{Ari2023} introduces goal-oriented sample selection and scheduling for remote inference over a channel with variable delay that exhibits memory, which can model a Delay and Disruption Tolerant Network (DTN)~\cite{uysal2024}. It is shown that an index-based threshold policy effectively utilizes delayed feedback to minimize the expected time-average inference error. Our work contributes to the aforementioned set of works by incorporating both sampling and transmission costs into the optimization problem. 

Along with those studies the paper can be considered as a contribution to the literature on remote estimation, e.g.,~\cite{gao2015optimal,gao2016remote,imer2010optimal,nar2014sampling,chakravorty2015distortion,lipsa2011remote,nayyar2013optimal}.
\section{System Model and Problem Statement}

We model the following remote estimation problem: There is an observer of the Wiener process \(\{W_t, t\in \mathbb{R} \}\) that can sample the process at time instants \(S_n\), according to a \textit{sampling policy}. We denote the samples by \((S_n, W_{S_n}), n\in \mathbb{Z}, n\geq 1\). 

Upon sampling at time \(S_n\), the pair \((S_n, W_{S_n})\) is made available to the  \textit{transmission policy}, which may decide to discard or transmit this sample. Upon a transmission decision, the sample will be encoded into a \textit{packet} to be given to the transmitter. The transmitter will send this packet to the estimator over a network connection (see Fig. \ref{fig:system model}). 
The actions of sampling and transmission have  associated costs of $c_s>0$ and $ c_\tau>0$, respectively. Costs are deterministic values that represent the resource expenditure for each respective action and are used to guide the optimization process. Note that, unlike in the original formulation in ~\cite{Elif-remote-est}, sender i.e. the transmission policy does not have access to the process $W(t)$ for all time, and can only observe the value of the samples. It implies that multiple samples may need to be produced before the next one that is sent.
In the rest, we will combine the sampling and transmission policies to a single jointly optimized policy. Note that there is a trade-off between the cost of sampling and the benefit of observing the process frequently. On the other hand, the existence of a non-zero transmission cost will lead to some of the generated samples ending up being discarded rather than transmitted. Through these costs, our problem formulation models some real-life implementation trade-offs that were abstracted out in ~\cite{Elif-remote-est}. 

We shall use the notation that the \(n^{th}\) packet to be transmitted is the \(k_n^{th}\) sample taken. Note that there are possibly multiple samples taken between any two samples chosen for transmission, such that $k_{n+1}-k_n\geq 0$. Also note that the transmission time for a sample might be larger than its sampling time, as the scheduling policy may decide to hold the head-of-line packet in the buffer by some time before deciding upon making a transmission. Figure \ref{fig:timeline} illustrates the process timeline. 

The estimator constructs the signal $\hat{W}_t$, which is the MMSE estimate of the original process based on the samples received by time $t$. The sampling and transmission policies are, however, designed under cost constraints.

\begin{figure}[ht]
\includegraphics[width=0.5\textwidth]{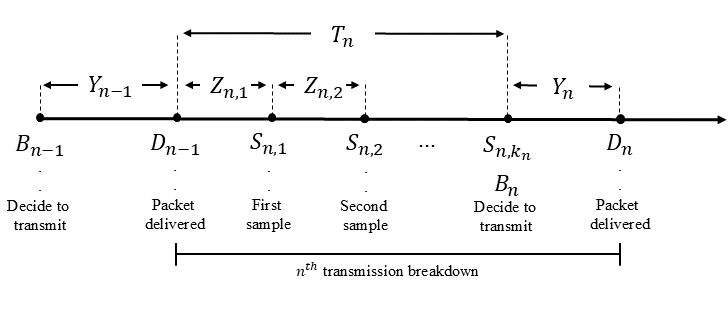}
    \caption{Process Timeline}
    \label{fig:timeline}
\end{figure}

We assume that the packets experience i.i.d. random transmission delays over the network connection. Specifically, the delay experienced by the $n^{th}$ packet is $\{Y_n, n\geq 1\}$, where $Y_n\geq 0$ is a random variable satisfying $E[Y_n^2] < \infty$. Let $B_n$ represent the time at which the transmission of the $n^{th}$ packet begins. Delivery time of packet $n$ is then $D_n = B_n + Y_n$ and $S_{n,j}$ is the time of the $j^{th}$ sample prior to the transmission of $n^{\rm{th}}$ packet,  $j \in {1,2,...,k_n}$. The initial value $W_0 = 0$ is initially known by the estimator, which is represented by $S_0 = D_0 = 0$. At any time $t$, the estimator forms an estimate $\hat{W}_t$ using the samples received up to time $t$.  \\ The loss will be measured by the time-average mean squared error (MSE):
\begin{equation}
\text{mse} = \limsup_{T \to \infty} \frac{1}{T} E \left[ \int_{0}^{T} (W_t - \hat{W}_t)^2 dt \right]. \quad 
\end{equation}

Here $\hat{W}_t$ is the MMSE estimate at time $t$ which uses the sampling time $S_j$, sample value $W_{S_j}$, delivery times $\{D_j\}$, number of the samples $M(t)$ delivered by time t (i.e. number of packets), and the fact that no sample has been received after the latest sample delivery, captured by the following information set:

\begin{equation}
\mathcal{I}(t) = \{(S_j, W_{S_j}, D_j)\} \cup \{M(t) = \max\{j \, | \, D_j \leq t\}\}.
\label{information-set}
\end{equation}

\noindent
This definition accounts for all samples created between delivery events, regardless of whether or not they were eventually transmitted.

The estimator assumes no additional information is inferred during silent intervals, meaning the absence of new deliveries does not trigger updates or adjustments. This simplifies the remote estimation problem by relying solely on the arrived samples. While “no new delivery” could imply partial information (e.g., no triggering event or transmission failure), ignoring this preserves a clean framework. Consequently, the time-average mean squared error (MSE) is minimized using the classical minimum mean-square error (MMSE) strategy \cite{poor-detection}. Putting all together; For $t \in [D_n, D_{n+1})$, the best available information is the most recently delivered sample, and by design, the estimator remains unaffected by the absence of deliveries between $D_n$ and $D_{n+1}$. Then the MMSE estimate $\hat{W}_t$ of $W_t$ is:
\begin{subequations}
\begin{align}
\hat{W}_t &= \mathbf{E} \left[ W_t | \mathcal{I}(t)  \right] \\
&=\mathbf{E} \left[ W_t | (S_{k_n}, W_{S_{k_n}}), D_n, M(t)  \right] \\
&= \mathbf{E} \left[ W_t | W_{S_{k_n}},M(t) \right]  \\
&= \mathbf{E} \left[ W_t | W_{S_{k_n}} \right] \\
&= W_{S_{k_n}}, \quad if \quad t \in \left[ D_n , D_{n+1} \right], \quad n = 0, 1, 2, . . . ,
\end{align}
\end{subequations}

From 2b to 2c definiton of $M(t)$ is used and from 2c to 2d the fact that conditioning on \(\{ M(t) = n \}\) is effectively the same as conditioning on 
\(\{\,W_{S_{k_n}}\text{ is the last arrived sample up to }t\}\) is used. Lastly, for a Wiener process, \(W_t\) can be expressed as:
\[
W_t = W_{S_{k_n}} + (W_t - W_{S_{k_n}}),
\]
where \(W_{S_{k_n}}\) is the sample value at the last sampling instant \(S_{k_n}\), \(W_t - W_{S_{k_n}} \sim \mathcal{N}(0, t - S_{k_n})\) is the Wiener increment, which is independent of \(W_{S_{k_n}}\). Thus, the conditional expectation is \(\mathbb{E}[W_t \mid W_{S_{k_n}}] = W_{S_{k_n}},\) because the increment \(W_t - W_{S_{k_n}}\) has zero mean and is independent of the past.

In this policy, once a new sample arrives at time $D_n$, the estimator updates to the latest sample $W_{S_{k_n}}$, and then \emph{continues to hold} that value until the next delivery occurs at time $D_{n+1}$.

The sampling times \(S_n\) are defined as stopping times based on the information available up to that point, which includes the realization of the delay process \(Y_j\) experienced up to time \(S_n\). A stopping time is essentially a decision-making point determined by the path of the Wiener process up to that time, ensuring the causality of the sampling policy. This means that at each sampling instant \(S_i\), the decision is made with complete knowledge of the past samples and their transmission delays.

Let $Z_{n,j}$ represent the waiting time between the delivery of sample \(n\) and the generation of sample \(n+1\):\\
\begin{equation}
    Z_{n,j} =
    \begin{cases} 
    S_{n,j}-S_{n,j-1} & \text{if } j > 1 \\
    S_{n,1}-D_{n-1} & \text{if } j=1
    \end{cases}
\end{equation}

These waiting times represent the intervals between sampling instants and are influenced by both the sampling decisions and the transmission delays experienced.  The sampling policy can therefore be expressed as a sequence \((S_1, S_2, \ldots)\) of sampling instants, where each \(S_i\) is generated based on the available information up to that point, including the realized delays \(Y_j\) up to time \(S_i\). If the sequence of delays \((Y_1, Y_2, \ldots)\) is given, and therefore the delivery times \(D_n\) are known, then the waiting times \((Z_1, Z_2, \ldots)\) are uniquely determined by the sequence of sampling times \((S_1, S_2, \ldots)\).

Let the set \(\Pi\) be the set of all \emph{causal joint sampling and transmission policies} such that the waiting times \(\{Z_{n,j}\}_{j=1}^{k_n}\) form a regenerative process. We aim to find a set of decision variables, consisting of the waiting times \(\{Z_{n,1}, \dots, Z_{n,k_n}\}\) and the stopping index \(k_n\), such that:
\[
\bigl\{(Z_{n,1}, \dots, Z_{n,k_n}) ,k_n\bigr\} \in \Pi,
\]
that minimizes the \emph{time-averaged mean squared error (MSE) under cost constraints}.

If inter-sampling waiting time \(Z_{n,j}\) is a regenerative process, then the sum \(T_n = \sum_{j=1}^{k_n} Z_{n,j}\), corresponding to inter-packet waiting time, is also a regenerative process. This is because the regeneration points of \(Z_{n,j}\) serve as regeneration points for \(T_n\), and the segments of \(T_n\) between these points are independent and identically distributed, thereby maintaining the regenerative property.

Because \(T_n\) is a regenerative process, by following the renewal theory in \cite{ross-stochastic} and \cite{haas-petrinets} (See Appendix A. for derivation) the problem for minimizing the time averaged MSE with sampling and transmission costs is formulated as:

\begin{equation}
\text{mse}_{\text{opt}} = \min_{\pi \in \Pi} \!\frac
{
\sum\limits_{n=1}^{\infty} \!\mathbb{E} \!\left[ \int\limits_{D_n}^{D_{n+1}} \!\!(W_t - W_0)^2 \, dt + c_s k_n + c_\tau \right]
}
{
\sum\limits_{n=1}^{\infty} \mathbb{E} \left[ D_{n+1} - D_n \right]
}
\label{eq:mseopt}
\end{equation}

\section{Optimal Policy}
The solution for (\ref{eq:mseopt}) involves Lagrange relaxation of problem, and backward induction. By introducing a Lagrange multiplier \(\lambda\) for normalization, the original problem is reformulated into a relaxed cost function that balances estimation accuracy with sampling and transmission costs.

The optimization is performed over:
\begin{itemize}
    \item \textbf{Sampling Intervals} (\(Z_{n,j}\)): Waiting times between consecutive samples within a transmission epoch.
    \item \textbf{Stopping Index} (\(k_n\)): Number of samples taken before the \(n\)-th transmission.
\end{itemize}

By iteratively adjusting \(\lambda\), the optimal policy is identified, minimizing the long-term average cost under the given constraints.

\subsection{Lagrange Relaxation and Problem Simplification}

To solve the (\ref{eq:mseopt}), we introduce a Lagrange multiplier $\lambda$ and define $\lambda^*$ as the minimum value of the objective function. Noting that $D_{n+1}\!-\!D_n = Y_n+T_n$ redefine the problem to find the optimal policy for an arbitrary $\lambda$ by minimizing the following cost function:

\begin{equation}
J(\lambda)=\lim_{N\to\infty}\frac{1}{N} \sum_{n=1}^{N} C_n(\lambda)
\label{eq:relaxed_problem}
\end{equation}

Where $C_n$, is the expected cost in one delivery epoch:

\[
    C_n(\lambda) = \mathbb{E}\left[\int\limits_{D_n}^{D_{n+1}}(W_t-W_0)^2 dt + c_sk_n + c_\tau -\lambda(Y_n + T_n)\right]
\]

For any policy \( \pi \in \Pi\), we can write:
\begin{equation}
\sum_{n=1}^{\infty}\mathbb{E}\!\left[\int\limits_{D_n}^{D_{n+1}}\mkern-10mu(W_t\!-\!W_0)^2\,dt\!+\!c_s k_n\!+\!c_\tau\!-\!\lambda^*(Y_n\!+\!T_n)\right]\!\geq\!0
\label{eq:lagrangae_multiplier}
\end{equation}

If we let  ${\displaystyle \lambda }$ be nonnegative weight, system get penalized if it violates the constraint, and system are also rewarded if it satisfies the constraint strictly. The above problem is Lagrangian relaxation of our original problem ensuring  the overall cost remains non-negative. 

\begin{lemma}
The following assertions are true for  (\ref{eq:mseopt}) and (\ref{eq:relaxed_problem}):
\begin{itemize}
    \item[(a)] \(\text{mse}_{\text{opt}} \geq \lambda\) if and only if \(J(\lambda) \geq 0\).
    \item[(b)] \(\text{mse}_{\text{opt}} \leq \lambda\) if and only if \(J(\lambda) \leq 0\).
    \item[(c)] If \(J(\lambda) = 0\), the solutions to (\ref{eq:mseopt}) and (\ref{eq:relaxed_problem}) are identical.
\end{itemize}
\label{lemma1}
\end{lemma}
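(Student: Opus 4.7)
My plan is to treat this lemma as a standard Dinkelbach-style equivalence between a fractional program and its Lagrangian parameterization, carried out epoch by epoch using the regenerative structure already established in the problem formulation. The key observation is that once the ratio form (\ref{eq:mseopt}) and the Cesàro form (\ref{eq:relaxed_problem}) are both reduced to single-epoch expectations, the three assertions become routine sign-chasing.

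First I would compress notation. For any policy $\pi \in \Pi$, let
\[
N_\pi \;:=\; \mathbb{E}\!\left[\int_{D_n}^{D_{n+1}}(W_t-W_0)^2\, dt + c_s k_n + c_\tau\right],\qquad D_\pi \;:=\; \mathbb{E}[Y_n+T_n],
\]
which are constant in $n$ because $\{T_n\}$ is regenerative. Under this notation, (\ref{eq:mseopt}) collapses to $\text{mse}_{\text{opt}} = \min_{\pi \in \Pi} N_\pi/D_\pi$ and (\ref{eq:relaxed_problem}) to $J(\lambda) = \min_{\pi \in \Pi}\bigl(N_\pi - \lambda D_\pi\bigr)$, since the average of identically distributed cycle costs $C_n(\lambda)$ converges to a single-epoch expectation. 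Note $D_\pi > 0$ because inter-delivery times are strictly positive in expectation, and this positivity is what licenses sign-preserving multiplication across the ratio inequality.

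For part (a), $\text{mse}_{\text{opt}} \geq \lambda$ is equivalent to $N_\pi/D_\pi \geq \lambda$ for every $\pi \in \Pi$, which by positivity of $D_\pi$ is equivalent to $N_\pi - \lambda D_\pi \geq 0$ for every $\pi$, i.e., $J(\lambda) \geq 0$. Part (b) is the symmetric statement: $\text{mse}_{\text{opt}} \leq \lambda$ iff some policy achieves $N_\pi/D_\pi \leq \lambda$, iff some policy achieves $N_\pi - \lambda D_\pi \leq 0$, iff $J(\lambda) \leq 0$. For part (c), combining (a) and (b) forces $\text{mse}_{\text{opt}} = \lambda$ whenever $J(\lambda) = 0$; any $\pi^\star$ attaining $J(\lambda) = 0$ satisfies $N_{\pi^\star} = \lambda D_{\pi^\star}$, so $N_{\pi^\star}/D_{\pi^\star} = \lambda = \text{mse}_{\text{opt}}$, meaning $\pi^\star$ also optimizes (\ref{eq:mseopt}); conversely any optimizer of (\ref{eq:mseopt}) drives $N_\pi - \lambda D_\pi$ to zero and hence optimizes (\ref{eq:relaxed_problem}). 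This gives the claimed identification of the two argmin sets.

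The main obstacle is not the sign chasing but justifying the reduction of (\ref{eq:mseopt}) and (\ref{eq:relaxed_problem}) to the per-epoch quantities $N_\pi$ and $D_\pi$, and ensuring that the minimum over $\pi$ commutes with the limit in (\ref{eq:relaxed_problem}). This is exactly where the regenerative property of $\{T_n\}$, the moment assumption $\mathbb{E}[Y_n^2]<\infty$, and the renewal-reward argument sketched in Appendix~A enter: they guarantee finite cycle lengths, let the Cesàro average be replaced by a single-cycle expectation, and let policy optimization be performed cycle by cycle. Once this reduction is granted, each item of the lemma reduces to a two-line equivalence.
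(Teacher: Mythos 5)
Your proposal is correct and follows essentially the same route as the paper's Appendix B: a Dinkelbach-style equivalence in which positivity of the expected inter-delivery time (guaranteed by the regenerative/renewal structure) licenses multiplying the ratio inequality through by the denominator to pass between the fractional form and the Lagrangian form, with part (c) obtained by combining (a) and (b). The only difference is cosmetic—you compress to per-epoch quantities $N_\pi$, $D_\pi$ while the paper keeps the Ces\`aro averages—and both arguments share the same implicit assumption that the relevant infima are attained.
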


\begin{proof}
See Appendix B.
\end{proof}

Assume we obtain the optimal policy that minimizes \(J(\lambda)\), then we can characterize the behavior of the cost function as follows:
\begin{itemize}
    \item \(J(\lambda^*) = 0\): There exists a value \(\lambda^*\) for which the cost function achieves its minimum value. This \(\lambda^*\) represents the optimal trade-off point where the MSE, sampling cost, and transmission cost are balanced most effectively.
    \item \(J(\lambda) > 0\) if \(\lambda < \lambda^*\): For values of \(\lambda\) less than \(\lambda^*\), the cost function is positive. This indicates that the policy underestimates the required trade-off, leading to a higher overall cost.
    \item \(J(\lambda) < 0\) if \(\lambda > \lambda^*\): For values of \(\lambda\) greater than \(\lambda^*\), the cost function is negative. This suggests that the policy overestimates the required trade-off, resulting in an overall reduction in cost.
\end{itemize}

The optimal policy is derived by iteratively adjusting \(\lambda\) and evaluating \(J(\lambda)\) until the condition \(J(\lambda^*) = 0\) is met. This approach ensures that the selected \(\lambda^*\) provides the most efficient balance between estimation accuracy and the associated costs of sampling and transmission. It is important to note that the state of the system is reset at the beginning of each transmission where state refers to the information available at the start of each transmission epoch, which determines the sampling and transmission decisions. It includes the current estimation error (\(E_m\)), representing the mean square error since the last transmitted sample, and relevant information about the Wiener process (\(Z_{n,j}\), \(Y_n\)). The state is reset at the beginning of each epoch, reflecting the regenerative nature of the system, implies that the decision process for each packet transmission is independent of previous packets, allowing the optimization to be performed on a per-packet basis. Then by value iteration, we derive the optimal waiting times $Z^*$ and associated costs. We use Lemma \ref{lemma2} to simplify the integral part of the cost function.

\begin{lemma}
Let \( X \) and \( Y \) be independent random variables with finite second moments. Then,
\[
\mathbb{E} \left[ \int_{X}^{X+Y} (W_t - W_0)^2 \, dt \right] = \frac{Y^2}{2} + Y(W_X - W_0)^2.
\]
\label{lemma2}
\end{lemma}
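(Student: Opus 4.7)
My plan is to condition on the triple $(X, W_X, Y)$ and exploit the independent‐increment (strong Markov) property of the Wiener process to reduce the left‐hand side to a deterministic integral. First I would note that for $t \geq X$ one can write
\[
W_t - W_0 \;=\; (W_t - W_X) \;+\; (W_X - W_0),
\]
so that
\[
(W_t - W_0)^2 \;=\; (W_t - W_X)^2 \;+\; 2(W_t - W_X)(W_X - W_0) \;+\; (W_X - W_0)^2.
\]
Because $X$ arises in the paper as a stopping time built from past samples and delays, and $Y$ is assumed independent of $X$ (and, implicitly, of the post‐$X$ evolution of $W$), the increment $W_t - W_X$ is independent of $\sigma(X, W_X, Y)$ with law $\mathcal{N}(0, t-X)$.

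Next I would apply Fubini/Tonelli (justified by $\mathbb{E}[Y^2]<\infty$ and the fact that $\mathbb{E}[(W_t-W_0)^2] = t$ grows only linearly) to exchange the outer integration over $t$ with the expectation, and then take conditional expectations term by term:
\[
\mathbb{E}\bigl[(W_t - W_X)^2 \mid X, W_X, Y\bigr] = t - X,
\qquad
\mathbb{E}\bigl[(W_t - W_X)(W_X - W_0) \mid X, W_X, Y\bigr] = 0.
\]
This collapses the conditional integrand to $(t-X) + (W_X - W_0)^2$, whose integral from $X$ to $X+Y$ is elementary:
\[
\int_X^{X+Y} (t-X)\,dt \;=\; \frac{Y^2}{2},
\qquad
\int_X^{X+Y} (W_X - W_0)^2\,dt \;=\; Y\,(W_X - W_0)^2,
\]
which gives the claimed identity (understood either as a conditional expectation given $(X, W_X, Y)$, or equivalently after taking a further expectation of the right‐hand side).

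The only real obstacle is interpretational rather than technical: as stated, the right-hand side still contains the random quantities $Y$ and $W_X$, so the equality should be read as holding almost surely under $\mathbb{E}[\,\cdot\mid X, W_X, Y]$. I would therefore open the proof by fixing this interpretation and making explicit the independence structure that the surrounding problem provides — namely, that $X$ is a stopping time with respect to the natural filtration of $W$ and that $Y$ is independent of the post-$X$ increments — since the bare statement of the lemma only asserts that $X$ and $Y$ are independent. Once this framing is in place, the computation above is routine.
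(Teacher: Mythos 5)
Your proof is correct and follows essentially the same route as the paper's: decompose $(W_t - W_0)^2$ via the increment $W_t - W_X$, kill the cross term by independence and zero mean, integrate the variance term to get $Y^2/2$, and pull out the constant term to get $Y(W_X-W_0)^2$. Your added remark about interpretation is well taken --- the paper's own proof actually concludes with $\tfrac{Y^2}{2} + Y\,\mathbb{E}[(W_X - W_0)^2]$, which does not literally match the lemma's right-hand side, so your explicit framing of the identity as a conditional expectation given $(X, W_X, Y)$ is a genuine (if small) improvement in precision rather than a deviation in method.
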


\begin{proof}
The proof follows from the properties of the Wiener process. See Appendix C. 
\end{proof}

 By Lemma \ref{lemma2}, by denoting the sampling error as $E_j \overset{\Delta}{=} (w_{Y+\bar{Z_j}}^2 - w_0)^2$,  \eqref{eq:relaxed_problem} can be written as:
 
 \begin{equation}
 \begin{split}
     J(\lambda) & = c_\tau - \lambda\mu_y + \mathbb{E} \left[ \frac{{Z_1}^2}{2} + Z_1 Y  + c_s - \lambda Z_1 \right] \\
     & \qquad + {\sum_{j=2}^{k} \mathbb{E} \left[ \frac{{Z_j}^2}{2} + Z_j {E_{j-1}}  + c_s - \lambda Z_j \right]} \\
     & \qquad + \mu_y \mathbb{E} \left[ E_k \right]
 \end{split}
 \end{equation}

Note that, $\mathbb{E}[E_k] = \mathbb{E}[Y+T_k] = \mu_y + \sum_{j=1}^{k} \mathbb{E}[Z_j] $ where \( \mu_Y \) is the mean transmission delay.  The cost function is then simplified to:
\begin{equation} \label{eq:eliminate-ek}
\begin{aligned}
    J(\lambda) = &c_\tau - (\lambda - \mu_Y)\mu_Y \\
    +& \mathbb{E}[\frac{Y^2}{2}] + \mathbb{E}[\frac{Z_1^2}{2} + Z_1(Y-\lambda + \mu_Y) + c_s] \\
    +& \sum_{j=2}^{k} \mathbb{E}[\frac{Z_j^2}{2} + Z_j(E_{j-1}-\lambda + \mu_Y) + c_s].
\end{aligned}
\end{equation}
Next, we define $h: \mathbb{R}_{\ge0} \times \mathbb{R}_{\ge0} \to \mathbb{R}$ as:
\begin{equation} \label{eq:h-function}
    h(x_1,x_2) \!=\! \frac{(x_1 \!+\! x_2 \!-\! \lambda \!+\! \mu_Y)^2}{2} \!-\! \frac{(x_2 \!-\! \lambda \!+\! \mu_Y)^2}{2} \!+\! c_s.
\end{equation}
and $h_0 \triangleq c_T - (\lambda - \mu_Y)\mu_Y + E[Y^2 / 2]$. 
Then, cost function \eqref{eq:eliminate-ek} can be rewritten in recursive form as:
\begin{equation} \label{eq:cost-simplified}
    h_0 + E[h(Z_1,Y)] + \sum_{j=2}^{k} E[h(Z_j, E_{j-1})].
\end{equation}
\subsection{Finding an optimal policy}
Define $g_r$ as the minimum value of the sum in \eqref{eq:cost-simplified} over a horizon of at most $r$ samples given an initial sampling error, capturing the minimum cost achievable from a specific state onward, helping in the iterative backward optimization process. Optimal waiting times \(\{Z_{n,1}, \ldots, Z_{n,k_n}\}\) for the \(n\)-th packet are determined iteratively using the function \(g_r(E_m)\).

\begin{equation}
    g_r(E_m) 
    = \min_{\{Z_{m+1}, \dots, Z_{m+r}\}}
    \sum_{j=1}^{r} 
    \mathbb{E}\bigl[h\bigl(Z_{m+j}, E_{m+j-1}\bigr)\bigr].
\end{equation}

where $E_{m+j} = \mathcal{N}(\sqrt{E_{m+j-1}}, Z_{m+j})^2$.
Notice that, $g_r$ depends only on $c_s$ and $\lambda - \mu_Y$. 

Suppose we plan to take $r$ additional samples from the current error state $E_m$. We can split this decision as follows: First, choose $Z_{m+1}$, the waiting time until the next sample. Once that sample is taken, the error state transitions to $E_{m+1}$. We now have an $(r-1)$-sample horizon remaining, starting from $E_{m+1}$.

In a recursive form (Bellman recursion), $g_r$ can be expressed as:
\begin{equation}
    g_r(E_m)= \min_{Z_{m+1}}\Bigl\{    \underbrace{h\bigl(Z_{m+1},\, E_m\bigr)}_{\text{current cost}}
      +
      \underbrace{\mathbb{E}\bigl[g_{r-1}(E_{m+1})\bigr]}_{\text{future cost}}
    \Bigr\}
    \label{eq:gr-recursion}
\end{equation}
where $E_{m+1}$ arises from the transition 
$E_{m+1} = \mathcal{N}\bigl(\sqrt{E_m},\,Z_{m+1}\bigr)^2.$
Here, $h\bigl(Z_{m+1}, E_m\bigr)$ represents the immediate expected cost incurred by waiting $Z_{m+1}$ from the error state $E_m$, and $\mathbb{E}[g_{r-1}(E_{m+1})]$ is the expected future cost over the remaining $(r-1)$ steps.

When $r=0$, no further samples are planned, so we set $g_0(E_m)=0$. For any finite $r$, we can compute $g_r(E_m)$ by iterating \eqref{eq:gr-recursion} from $r$ down to 1.

Following, the minimum value of \eqref{eq:cost-simplified} is:
\begin{equation}
    h_0 + E[h(Z_1,Y) + g_{\infty} (E_1)],
\end{equation}
where $g_{\infty} (E_m) = \lim_{r \to \infty} g_r (E_m)$ and $E_1 = \mathcal{N} (0,Y+Z_1)^2$. 

Recall that each new sampling error \(E_{m+j}\) depends on its previous value \(E_{m+j-1}\) and the waiting time \(Z_{m+j}\). This arises from the incremental property of the Wiener process: over any waiting interval \(Z_{m+j}\), the process increment is a zero-mean Gaussian random variable with variance \(Z_{m+j}\). Therefore, if the previous sampling error is \(E_{m+j-1}\), then combining this error with the new increment the error term \(E_{m+j}\) is updated as:
\[
    E_{m+j} = \bigl(W_{t + Z_{m+j}} - W_0\bigr)^2,
\]
where \(t\) is the sampling time of the \((m+j-1)\)-th sample.

\begin{corollary}
    Let $E'$ be the sampling error and let $Z$ be the waiting time until the next sample. Then, probability distribution of the next sampling error, $E''$, is given by:
    \begin{equation}
        E'' = \mathcal{N} (\sqrt{E'},Z).
    \end{equation}
    
\end{corollary}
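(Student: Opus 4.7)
The proof will follow almost immediately from the independent-increment property of the Wiener process together with the definitions introduced just above the corollary. The plan is to (i) express both errors as squared Wiener displacements from the reference value $W_0$, (ii) decompose the newer displacement into the older one plus a fresh independent increment, and (iii) read off the resulting conditional law by squaring.

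Concretely, let $S'$ denote the sampling instant associated with the previous error $E'$ and set $S'' = S' + Z$, so that $E' = (W_{S'} - W_0)^2$ and $E'' = (W_{S''} - W_0)^2$. The first step is the telescoping identity
\begin{equation*}
W_{S''} - W_0 \;=\; (W_{S'} - W_0) \;+\; (W_{S''} - W_{S'}).
\end{equation*}
Because $Z$ is determined by information available up to time $S'$ (by admissibility of the joint sampling--transmission policy, $S'$ and $S''$ are stopping times), the strong Markov and independent-increment properties of the Wiener process imply that $W_{S''} - W_{S'}$, conditional on $Z$, is independent of $\mathcal{F}_{S'}$ and distributed as $\mathcal{N}(0, Z)$. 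Conditioning on $E' = e$, the first summand equals $\pm \sqrt{e}$, hence $W_{S''} - W_0 \mid (E' = e, Z) \sim \mathcal{N}(\pm \sqrt{e}, Z)$, and squaring recovers the asserted distribution of $E''$ (consistent with the recursion $E_{m+j} = \mathcal{N}(\sqrt{E_{m+j-1}}, Z_{m+j})^2$ used earlier in the derivation).

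The one subtle point, which I expect to be the main (though modest) obstacle, is the sign ambiguity introduced by $\sqrt{E'}$: taking a square root recovers only the magnitude of the prior displacement $W_{S'} - W_0$, not its sign. I would resolve this by invoking the symmetry of the centered Gaussian, namely that $\mathcal{N}(a, Z)^2$ and $\mathcal{N}(-a, Z)^2$ share the same distribution (both are laws of a scaled non-central chi-squared variable that depend on $a$ only through $a^2$). Consequently, the law of $E''$ is a well-defined function of the pair $(E', Z)$ alone, which is precisely the claim of the corollary. No additional moment assumptions are required beyond those already used for the Wiener process and the finite-second-moment stipulation on $Z$ inherited from the policy class $\Pi$.
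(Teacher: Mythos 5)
Your proof is correct and follows essentially the same route as the paper, which justifies the corollary only informally via the independent-increment property of the Wiener process (decomposing $W_{S''}-W_0$ into the prior displacement plus a fresh $\mathcal{N}(0,Z)$ increment and squaring). Your explicit handling of the sign ambiguity in $\sqrt{E'}$ --- resolved by noting that $\mathcal{N}(a,Z)^2$ and $\mathcal{N}(-a,Z)^2$ share the same law, so the transition depends on the prior displacement only through $E'$ --- is a point the paper glosses over, and you correctly supply the square that the corollary's displayed equation omits but that the surrounding recursion $E_{m+j} = \mathcal{N}(\sqrt{E_{m+j-1}}, Z_{m+j})^2$ requires.
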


By modeling the new sampling error \(E''\) in this way, we obtain a \emph{recursive} relationship:
\[
    E_{m+j} 
    = f\bigl(E_{m+j-1}, Z_{m+j}\bigr)
    = \mathcal{N}\bigl(\sqrt{E_{m+j-1}},\,Z_{m+j}\bigr)^2,
\]
where \(f\) represents the function encapsulating the transition from the old error \(E_{m+j-1}\) to the new error \(E_{m+j}\). Corollary 1 highlights this property to emphasize that each new sampling error state can be derived by applying a Gaussian increment to the previous error state. This insight then underpins the dynamic programming framework, wherein the transition from \(E_{m+j-1}\) to \(E_{m+j}\) is accounted for at each step of the backward induction process.

\subsection{Main Theorem}
Given \(E_{m+j-1}\), we update \(E_{m+j}\) iteratively using the mean and variance of the Wiener process. Continue the iteration until \(g_r(E_m)\) converges to \(g_\infty(E_m)\), representing the minimum cost over an infinite horizon. The cost function \(J(\lambda)\) is evaluated iteratively, and the iteration stops when \(J(\lambda)\) becomes sufficiently small (e.g., below a predefined threshold \(\epsilon\)). The stopping criterion for \(k_n\) is based on the marginal contribution of adding another sample \(Z_{n,k_n+1}\): the process halts when the incremental cost of \(h(Z_j, E_{j-1})\) no longer improves the objective \(J(\lambda) = h_0 + \mathbb{E}[h(Z_1, Y)] + \sum_{j=2}^{k} \mathbb{E}[h(Z_j, E_{j-1})]\). This recursive optimization adjusts \(Z_{n,j}\) and determines \(k_n\) inherently, stopping when further samples increase the total cost.

Under appropriate boundedness and Lipschitz conditions on \(h(\cdot,\cdot)\) and a contractive state transition, standard dynamic programming arguments guarantee that the sequence \(\{g_r(E)\}\) converges uniformly to a unique fixed point \(g_\infty(E)\) at a geometric rate \cite{Puterman2005}. Consequently, the overall cost
\[
J(\lambda)= h_0 + \mathbb{E}\Bigl[ h(Z_1,Y) + g_\infty(E_1) \Bigr],
\]
converges to its optimum when \(J(\lambda^*)=0\), yielding the optimal waiting times \(\{Z_{n,j}^*\}\) and stopping indices \(k_n^*\). We formally state this result as follows.

\begin{theorem}[Convergence of the Optimal Policy Algorithm]
Assume that 
\[h(x_1,x_2)= \frac{(x_1+x_2-\lambda+\mu_Y)^2}{2} - \frac{(x_2-\lambda+\mu_Y)^2}{2} + c_s\] 
is bounded and Lipschitz continuous and that the state transition \(E_{m+1} = \mathcal{N}(\sqrt{E_m},\,Z_{m+1})^2\) is contractive. Then the sequence \(\{g_r(E)\}\) defined by
\[
g_r(E) = \min_{Z\ge0} \Bigl\{ h(Z,E) + \mathbb{E}\bigl[g_{r-1}(E')\bigr] \Bigr\}, \quad g_0(E)=0,
\]
converges uniformly to a unique fixed point \(g_\infty(E)\) for all \(E\ge0\) with a geometric rate. Consequently, the overall cost function
\[
J(\lambda)= h_0 + \mathbb{E}\Bigl[ h(Z_1,Y) + g_\infty(E_1) \Bigr],
\]
converges to its optimal value when \(J(\lambda^*)=0\), ensuring that the optimal policy—yielding optimal waiting times \(\{Z_{n,j}^*\}\) and stopping indices \(k_n^*\)—minimizes the long-term average cost that balances estimation accuracy with sampling and transmission costs.
\end{theorem}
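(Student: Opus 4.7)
The plan is to cast the value iteration~\eqref{eq:gr-recursion} as a fixed-point problem for a Bellman operator and then invoke the Banach contraction principle. I would define the operator $T$ on a Banach space $(\mathcal{F},\|\cdot\|)$ of admissible functions $f:\mathbb{R}_{\geq 0}\to\mathbb{R}$ by
\[
(Tf)(E) \;=\; \inf_{Z\geq 0}\bigl\{h(Z,E) + \mathbb{E}[f(E')]\bigr\}, \qquad E' = \mathcal{N}(\sqrt{E},Z)^2,
\]
so that $g_r = T^r 0$ by construction. The preliminary tasks are to fix $\mathcal{F}$ (the bounded Lipschitz functions under the sup norm, or a span semi-norm variant) and to verify that $T$ maps $\mathcal{F}$ into itself: boundedness of $h$ propagates to $Tf$, and Lipschitzness of $h$ together with the regularity of the transition kernel keeps $Tf$ Lipschitz. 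I would also check that the infimum is attained at some $Z^*(E)\geq 0$; this follows from coercivity of the quadratic part of $h$ as $Z\to\infty$ and lower semicontinuity in $Z$, so the minimizers are well-defined.

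The core step is showing that $T$ is a strict contraction. From the elementary estimate $|(Tf_1)(E)-(Tf_2)(E)|\leq \sup_{Z\geq 0}\bigl|\mathbb{E}[f_1(E')-f_2(E')]\bigr|$, the contractivity hypothesis on the state transition is used to supply a constant $\beta\in(0,1)$ such that the Markov operator $P^Z f(E) = \mathbb{E}[f(\mathcal{N}(\sqrt{E},Z)^2)]$ satisfies $\|P^Z f\|\leq \beta\|f\|$ in the chosen norm, uniformly in $Z$. Combined with the Lipschitz property of $h$, this yields $\|Tf_1-Tf_2\|\leq \beta\|f_1-f_2\|$, so Banach's theorem delivers a unique fixed point $g_\infty$ with geometric rate $\|g_r - g_\infty\|\leq C\beta^r$, which is precisely the claim.

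The hard part, I expect, is identifying the precise norm under which $T$ is genuinely contractive rather than merely non-expansive. A naive sup-norm argument typically gives only $\beta=1$ because $h$ introduces no explicit discount factor; the standard remedy in undiscounted average-cost dynamic programming is to work with the span semi-norm $\|f\|_{\mathrm{sp}} = \sup f - \inf f$ and to deduce contraction from a Doeblin-type minorization of the kernel $E\mapsto\mathcal{N}(\sqrt{E},Z)^2$. Translating the paper's ``contractive state transition'' assumption into such a minorization, and in particular controlling the behaviour near $E=0$ where $\sqrt{E}$ fails to be Lipschitz, will require the most care. Once this contraction is in hand, convergence of the overall cost follows by passing to the limit in $J_r(\lambda) = h_0 + \mathbb{E}[h(Z_1,Y) + g_r(E_1)]$ via dominated convergence (justified by boundedness of $h$), and Lemma~\ref{lemma1} combined with monotonicity of $J(\cdot)$ in $\lambda$ yields a unique $\lambda^*$ with $J(\lambda^*)=0$; the optimal waiting times $\{Z^*_{n,j}\}$ and stopping index $k_n^*$ are then extracted as the $\arg\min$s of the Bellman recursion evaluated at $\lambda=\lambda^*$.
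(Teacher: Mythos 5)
The paper does not actually prove this theorem: the text preceding it defers entirely to ``standard dynamic programming arguments'' with a citation to Puterman, so your proposal is being measured against an argument the authors never wrote out. Your Banach fixed-point framework is exactly the standard argument being gestured at, and the most valuable part of your write-up is that you correctly locate where it breaks: the Bellman operator here carries no discount factor, so in the sup norm $T$ is only non-expansive, and the geometric-rate claim does not follow. Unfortunately the repair you sketch (span semi-norm plus Doeblin minorization) does not obviously go through either, for two concrete reasons. First, the stated hypothesis is unsatisfiable by the actual $h$: expanding \eqref{eq:h-function} gives $h(x_1,x_2)=\tfrac{x_1^2}{2}+x_1(x_2-\lambda+\mu_Y)+c_s$, a quadratic that is neither bounded nor globally Lipschitz on $\mathbb{R}_{\ge 0}^2$, so any proof that genuinely uses boundedness of $h$ to keep $Tf$ in $\mathcal{F}$ is proving a vacuous statement unless the state and action spaces are first compactified (e.g., by showing a priori that optimal $Z$ and reachable $E$ live in a bounded set). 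Second, the minimization is over all $Z\ge 0$ including $Z=0$, where the kernel $E\mapsto\mathcal{N}(\sqrt{E},Z)^2$ degenerates to a point mass at $E$; no minorization uniform in $Z$ can hold, so the constant $\beta<1$ you need ``uniformly in $Z$'' is not available without excluding a neighborhood of $Z=0$ or proving the optimizer stays away from it. You flag this region as requiring ``the most care,'' but care alone will not produce the constant --- an additional structural argument is needed, and neither you nor the paper supplies it.

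A route that sidesteps the contraction issue entirely is suggested by the paper's own definition of $g_r$ as the minimum over a horizon of \emph{at most} $r$ samples: under that reading the feasible sets are nested, so $g_{r}(E)\le g_{r-1}(E)\le\cdots\le g_0(E)=0$ pointwise, and $g_r\downarrow g_\infty$ by monotone convergence once one shows $g_r$ is bounded below (which reduces to bounding $\sum_j \mathbb{E}[h(Z_j,E_{j-1})]$ from below, a finite quantity whenever $\lambda$ is finite since each term is at least $c_s-(\lambda-\mu_Y)^2/2$). This yields existence of $g_\infty$ and validity of the limit in $J(\lambda)$ by dominated or monotone convergence, at the price of losing the geometric rate and uniqueness-as-fixed-point claims --- which is arguably the honest version of what can be proved from the paper's actual construction. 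Your final step, extracting $\lambda^*$ from $J(\lambda^*)=0$ via Lemma \ref{lemma1} and monotonicity of $J$ in $\lambda$, is sound and matches the paper's intent.
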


  \section{Numerical Analysis}
To provide a comprehensive evaluation of the proposed optimal sampling and transmission policy, we consider a variety of scenarios that cover somewhat mild as well as more extreme operating conditions. These cases help illustrate the behavior of the system under different cost structures and transmission conditions. The results depicted in Figure \ref{fig:mse_vs_delay_var}, comparing the MSE of our joint error and cost optimal policy with a conventional periodic sampling approach with respect to varying delay variance $(\sigma^2)$.

\begin{figure}[ht]
    \centering
    \includegraphics[width=0.45\textwidth]{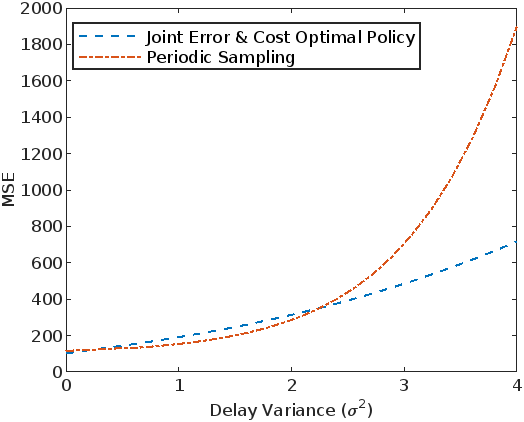}
    \caption{MSE vs. the delay variance $\sigma^2$ for $c_s = 2$, $c_t = 5$, and $\lambda = 10$}
    \label{fig:mse_vs_delay_var}
\end{figure}

The optimal policy was derived using Lagrange relaxation and iterative backward induction to minimize the time-average MSE, considering the costs associated with sampling and transmission. The optimal policy consistently achieves significant gain over periodic sampling across different costs, particularly in scenarios with higher delay variability. As the parameter $\sigma$ grows, the tail of the delay distribution becomes heavier. We observe that $\text{MSE}_{\text{periodic}}$ grows quickly with respect to $\sigma^2$, much faster than $\text{MSE}_{\text{opt}}$. Consequently, the estimator's performance degrades significantly, resulting in higher MSE. Conversely, the optimal policy exhibits a more gradual increase due to its adaptive nature, adjusting sampling intervals based on the current state and transmission delays. By balancing the trade-off between sampling frequency and delay variability, the optimal policy maintains lower estimation errors across a broader range of delay conditions.

\begin{figure}[ht]
    \centering
    \includegraphics[]{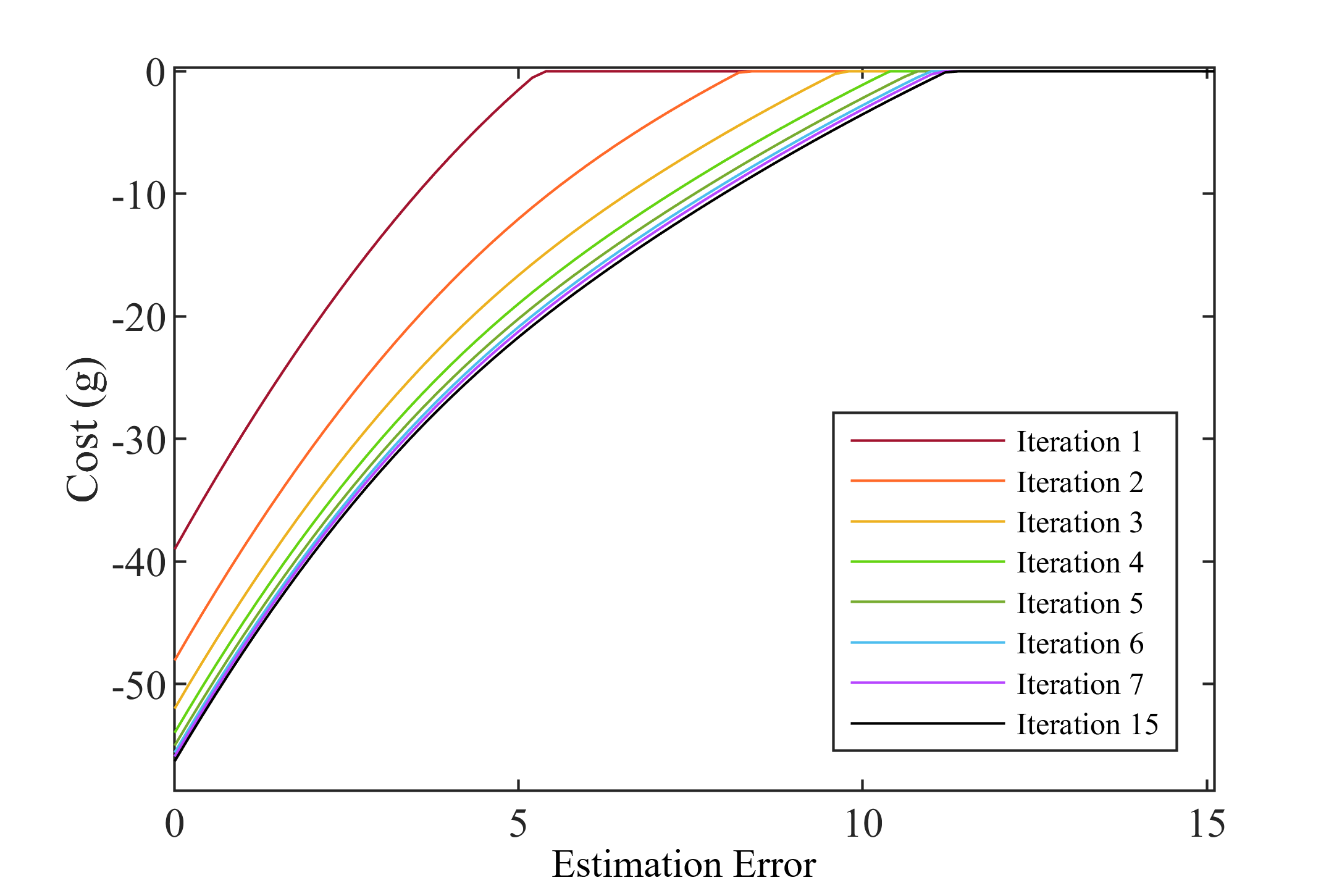}
    \caption{Convergence of cost over iterations for $\sigma^2 = 0.1$ and $c_s = 2$, $c_t = 5$, $\lambda = 10$}
    \label{fig:convergence}
\end{figure}

Figure \ref{fig:convergence} shows the convergence of the \(g\) values over the iterations, indicating that the optimization process is functioning correctly and the algorithm is finding the optimal policy. 
\begin{figure}[ht]
    \centering
    \includegraphics[]{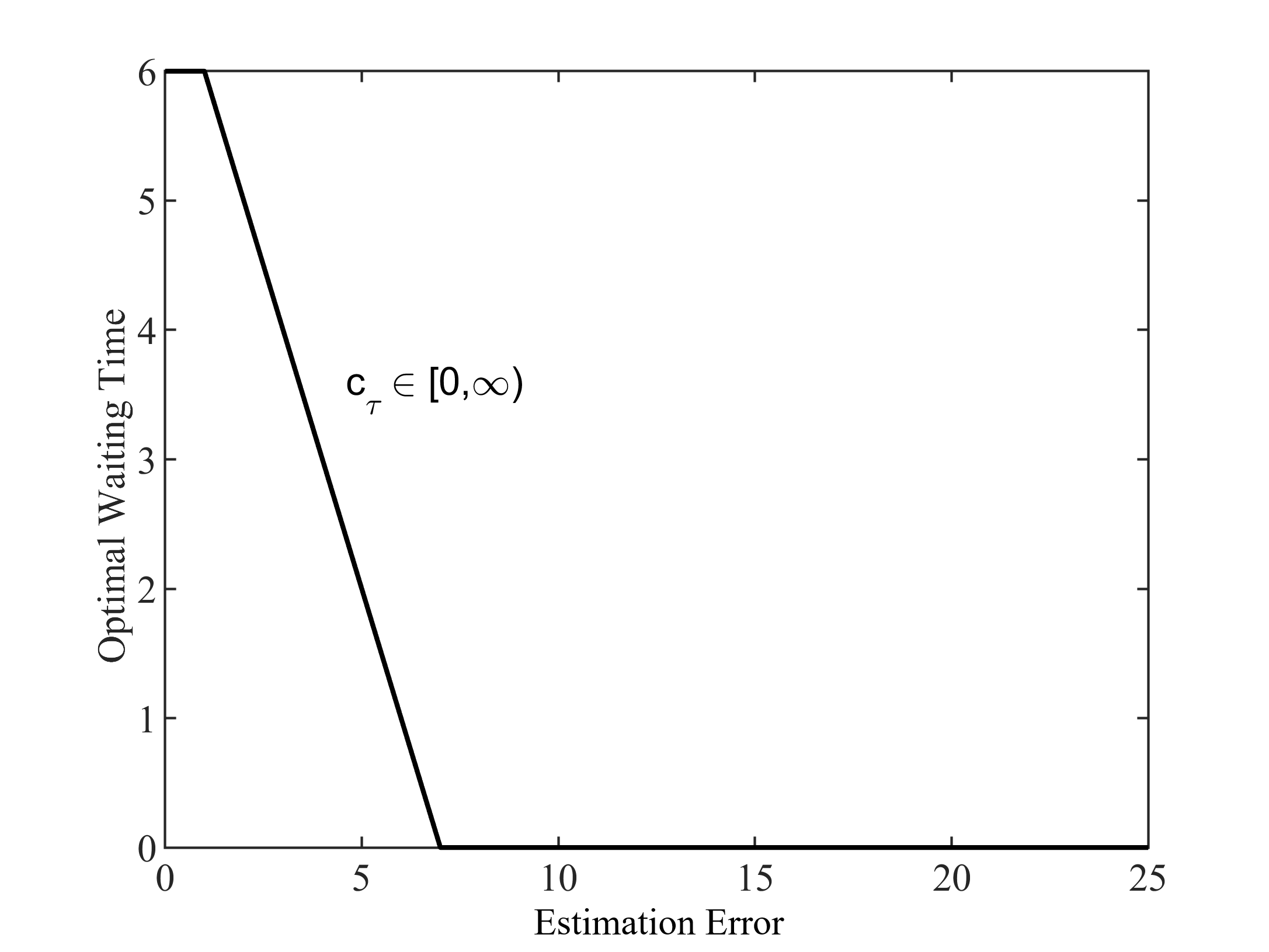}
    \caption{Optimal waiting times vs. error for $\sigma^2 = 0$ and $c_s = 1$ as constant}
    \label{fig:waiting_varY_0}
\end{figure}
In the figure \ref{fig:waiting_varY_0} with \( \sigma^2 = 0 \), the optimal waiting time (\( Z^*\)) decreases linearly with increasing estimation error (E), and the curves for different transmission costs (\( c_\tau = 0.1, 1, 10 \)) overlap completely. This is because the transmission delay \( Y \) is deterministic, eliminating any variability in transmission times. Consequently, the system can directly reduce the waiting time as the error increases to correct the estimation more frequently. The independence of the curves from the transmission cost indicates that, without delay variability, the primary concern is correcting the error. The linear decrease in the optimal waiting time $Z^*$ with increasing estimation error suggests a direct relationship between the error and the time to wait before taking the next sample. 

\begin{figure}[ht]
    \centering
    \includegraphics[]{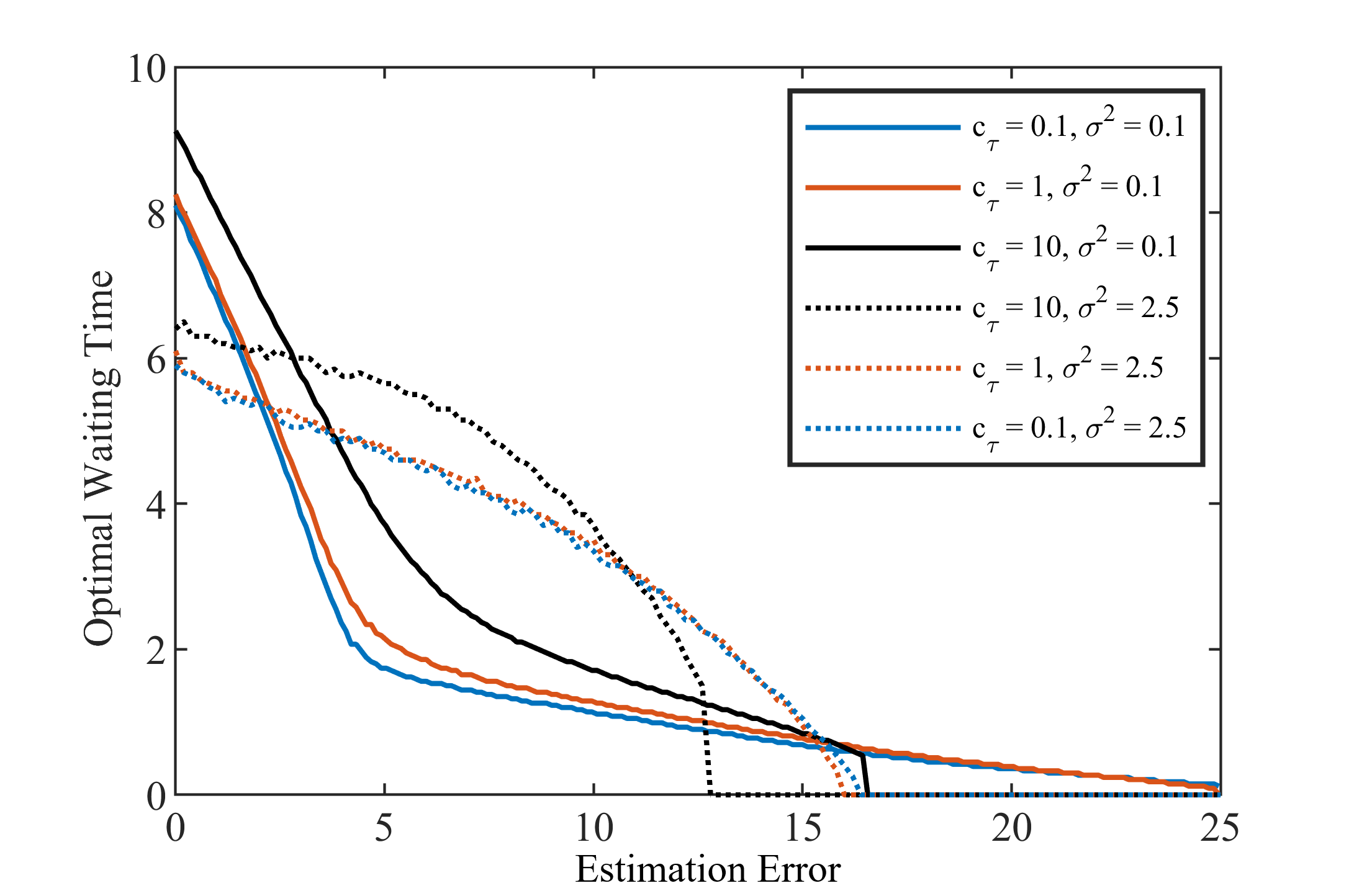}
    \caption{Optimal waiting times vs. error for $\sigma^2 = 0.1,\sigma^2 = 2.5$ and $c_s = 1$ as constant}
    \label{fig:waiting_varY_2}
\end{figure}

For low delay variance $\sigma^2 = 0.1$ the waiting time decreases with increasing error, with minor differences between $c_\tau$ values. For high delay variance $\sigma^2 = 2.5$ the waiting time decreases more sharply, with higher $c_\tau$ values resulting in significantly longer waiting times. 
When the delay variance is low, the waiting times are relatively short, and the system can afford to be more responsive to errors. The differences between the curves for different $c_\tau$ are less pronounced, indicating a smaller impact of transmission costs in low variance scenarios. On the other hand, with high delay variance, the system must wait longer to account for the increased variability in transmission times. The impact of transmission costs is more pronounced, with higher costs leading to longer optimal waiting times to balance the trade-off between error correction and cost.

For all cases, optimal waiting time decreases with increasing error. This behavior is expected as the system should react more quickly to larger errors to maintain accurate estimations. Beyond a certain error range, the optimal waiting time becomes very small or zero, indicating that for very high errors, immediate sampling is preferred, aligning with the need to correct larger deviations as promptly as possible.

\section{Conclusion}

In this paper, we addressed the problem of sampling a Wiener process under cost for remote estimation over a network connection with random IID delay. Utilizing Lagrange relaxation and iterative backward induction, we derived a policy that minimizes the time-average MSE while accounting for the costs associated with sampling and transmission. The results substantiate the efficacy of our approach in minimizing MSE under varying transmission conditions. Specifically, the robustness of the optimal policy to delay variance highlights its potential for practical implementation in communication systems where transmission delays are unpredictable. This robustness is particularly critical in applications such as remote sensing, networked control systems, and real-time monitoring, where maintaining low estimation error is paramount. Future research could explore the applicability of our findings to other types of stochastic processes and further refine the optimization techniques to accommodate more complex and dynamic network conditions.

\appendices

\section{Formulation of (\ref{eq:mseopt})} 
We derive the time-averaged cost function using renewal properties and renewal-reward theory. Let \(D_n\) denote the \(n\)-th renewal epoch, where the \(n\)-th renewal interval starts at \(D_n\) and ends at \(D_{n+1}\). The inter-renewal time is given by:
\[
X_n = D_{n+1} - D_n = Y_n + Z_{n,1} + Z_{n,2} + \cdots + Z_{n,k_n}.
\]

The reward function \(R(t)\) represents the value of cost function at time \(t\), which includes the estimation error, sampling cost, and transmission cost. We aim to find the time-average reward:
\[
\lim_{T \to \infty} \frac{1}{T} \int_{0}^{T} R(t) \, dt.
\]

The total accumulated reward up to time \(T\) can be expressed as:
\[
\int_{0}^{T} R(t) \, dt = \sum_{n=1}^{M(T)} R_n + \int_{D_{M(T)}}^{T} R(t) \, dt,
\]
where \(M(t)\) is defined in \eqref{information-set} and \(R_n\) is the accumulated reward in the \(n\)-th renewal interval :
\[
R_n = \int_{D_n}^{D_{n+1}} R(t) \, dt.
\]

Divide the total reward by \(T\) and take the limit as \(T \to \infty\):
\[
\lim_{T \to \infty} \frac{1}{T} \int_{0}^{T} R(t) \, dt = \lim_{T \to \infty} \frac{1}{T} \left( \sum_{n=1}^{M(T)} R_n + \int_{D_{M(T)}}^{T} R(t) \, dt \right).
\]

The second term, \(\frac{1}{T} \int_{D_{M(T)}}^{T} R(t) \, dt\), vanishes as \(T \to \infty\) because the duration of the partial renewal interval is bounded. Thus, the time-average reward reduces to:
\[
\lim_{T \to \infty} \frac{1}{T} \int_{0}^{T} R(t) \, dt = \lim_{T \to \infty} \frac{1}{T} \sum_{n=1}^{M(T)} R_n.
\]

Using the renewal-reward theorem, we express this as:
\[
\lim_{T \to \infty} \frac{1}{T} \int_{0}^{T} R(t) \, dt = \frac{\mathbb{E}[R_n]}{\mathbb{E}[X_n]},
\]
where:
\(\mathbb{E}[R_n]\) is the expected accumulated reward in a single renewal interval.
\(\mathbb{E}[X_n]\) is the expected inter-renewal time.

The reward in the \(n\)-th renewal interval is:
   \[
   R_n = \int_{D_n}^{D_{n+1}} (W_t - W_0)^2 \, dt + c_s k_n + c_\tau.
   \]

Using the result of renewal-reward theorem, and substituting the expressions for \(\mathbb{E}[R_n]\) and \(\mathbb{E}[X_n]\), we obtain  time average cost as:
\[ \frac
{
\sum_{n=1}^\infty \mathbb{E} \left[ \int_{D_n }^{D_{n+1}} (W_t - W_0)^2 \, dt + c_s k_n + c_\tau \right]
}
{
\sum_{n=1}^\infty \mathbb{E}[ D_{n+1}- D_n ].
}
\]

This completes the derivation of Equation \eqref{eq:mseopt} using renewal-reward theory.

\section{Proof of Lemma (\ref{lemma1})} 
Part (a) and (b) is proven in two steps:

\textbf{Step 1:} We will prove that \(\text{mse}_{\text{opt}} \leq \lambda \) if and only if \(J(\lambda) \leq 0\).

If \(\text{mse}_{\text{opt}} \leq \lambda\), then there exists a policy \(\pi = (Z_0, Z_1, \ldots) \in \Pi\) that is feasible for both (\ref{eq:mseopt}) and (\ref{eq:relaxed_problem}), which satisfies
\[
\lim_{n \to \infty} \frac{\sum_{i=0}^{n-1} \mathbb{E} \left[ \int_{D_i}^{D_{i+1}} (W_t - W_{S_i})^2 \, dt \right]}{\sum_{i=0}^{n-1} E[Y_i + T_i]} \leq \lambda. \quad 
\]
Hence,
\[
\lim_{n \to \infty} \frac{1}{n} \sum_{i=0}^{n-1} \mathbb{E} \left[ \int_{D_i}^{D_{i+1}} (W_t - W_{S_i})^2 \, dt - \lambda(Y_i + T_i) \right] \leq 0. \quad
\]
Because the inter-packet waiting times \(T_i\) are regenerative, the renewal theory \cite{ross-stochastic} tells us that the limit
\[
\lim_{n \to \infty} \frac{1}{n} \sum_{i=0}^{n-1} \mathbb{E}[Y_i + T_i]
\]
exists and is positive. By this, we get
\begin{equation}
\lim_{n \to \infty} \frac{1}{n} \sum_{i=0}^{n-1} \!\mathbb{E}\! \left[ \int_{D_i}^{D_{i+1}} \!\!\!(W_t \!-\! W_{S_i})^2 \, \!dt \!-\! \lambda(Y_i + T_i) \right] \!\! \leq \! 0.
\label{semi-pos_lagrangian}
\end{equation}

Therefore, \(J(\lambda) \leq 0\).

On the reverse direction, if \(J(\lambda) \leq 0\), then there exists a policy \(\pi = (Z_0, Z_1, \ldots) \in \Pi\) that is feasible for both (\ref{eq:mseopt}) and (\ref{eq:relaxed_problem}), which satisfies (\ref{semi-pos_lagrangian}). From there, we can derive previous steps. Hence, \(\text{mse}_{\text{opt}} \leq \lambda\). By this, we have proven that \(\text{mse}_{\text{opt}} \leq \lambda\) if and only if \(J(\lambda) \leq 0\).

\textbf{Step 2:} We need to prove that \(\text{mse}_{\text{opt}} < \lambda\) if and only if \(J(\lambda) < 0\). This statement can be proven by using the arguments in Step 1, in which “\(\leq\)” should be replaced by “\(<\)”. Finally, from the statement of Step 1, it immediately follows that \(\text{mse}_{\text{opt}} > \lambda\) if and only if \(J(\lambda) > 0\). This completes the proof of part (a) and (b).

\textbf{Part (c):} We first show that each optimal solution to (\ref{eq:mseopt}) is an optimal solution to (\ref{eq:relaxed_problem}). By the claim of the part (a) and (b), \(J(\lambda) = 0\) is equivalent to \(\text{mse}_{\text{opt}} = \lambda\). Suppose that policy \(\pi = (Z_0, Z_1, \ldots) \in \Pi\) is an optimal solution to (\ref{eq:mseopt}). Then, \(\text{mse}_{\pi} = \text{mse}_{\text{opt}} = \lambda\). Applying this in the arguments of (\ref{semi-pos_lagrangian}), we can show that policy \(\pi\) satisfies
\[
\lim_{n \to \infty} \frac{1}{n} \sum_{i=0}^{n-1} \mathbb{E} \left[ \int_{D_i}^{D_{i+1}} (W_t - W_{S_i})^2 \, dt - \lambda(Y_i + T_i) \right] = 0.
\]
This and \(J(\lambda) = 0\) imply that policy \(\pi\) is an optimal solution to (\ref{eq:relaxed_problem}).

Similarly, we can prove that each optimal solution to (\ref{eq:relaxed_problem}) is an optimal solution to (\ref{eq:mseopt}). By this, part (c) is proven.

\section{Proof of Lemma (\ref{lemma2})} 
Using the property of the Wiener process, expand \((W_t - W_0)^2\) as:
\[
(W_t - W_0)^2 = \left[(W_t - W_X) + (W_X - W_0)\right]^2.
\]
Expanding the square:
\[
\begin{split}
(W_t - W_0)^2 & = (W_t - W_X)^2 + 2(W_t - W_X)(W_X - W_0) \\
& + (W_X - W_0)^2.
\end{split}
\]

The expectation of the integral becomes:
\[
\begin{split}
\mathbb{E} \left[ \int_X^{X+Y} (W_t - W_0)^2 \, dt \right] 
&= 
\mathbb{E} \left[ \int_X^{X+Y} (W_t - W_X)^2 \, dt \right] \\
&+ 
\mathbb{E} \left[ \int_X^{X+Y} (W_X - W_0)^2 \, dt \right].
\end{split}
\]

Since \((W_t - W_X)\) is independent of \((W_X - W_0)\) and has zero mean the term:
\[
\mathbb{E} \left[ \int_X^{X+Y} 2(W_t - W_X)(W_X - W_0) \, dt \right] = 0.
\]

The term \((W_t - W_X)^2\) is the variance of the Wiener process over \([X, t]\), which simplifies to:
\[
\mathbb{E} \left[ \int_X^{X+Y} (W_t - W_X)^2 \, dt \right] = \mathbb{E} \left[ \int_X^{X+Y} (t - X) \, dt \right].
\]
Change variables \(u = t - X\) with bounds \(u \in [0, Y]\):
\[
\int_X^{X+Y} (t - X) \, dt = \int_0^Y u \, du = \frac{Y^2}{2}.
\]

The term $(W_X - W_0)^2$ is constant over \([X, X+Y]\):
\[
\mathbb{E} \left[ \int_X^{X+Y} (W_X - W_0)^2 \, dt \right] = \mathbb{E} \left[ (W_X - W_0)^2 \right] \cdot Y.
\]

As a result right hand side of the equation simplifies to:
\[
\mathbb{E} \left[ \int_X^{X+Y} (W_t - W_0)^2 \, dt \right] = 
\frac{Y^2}{2} + Y \cdot \mathbb{E}[(W_X - W_0)^2].
\]

This completes the proof.

\section{Proof of (\ref{eq:lagrangae_multiplier})} 
We start with the original problem of minimizing the long-term average mean square estimation error (MSE) subject to the costs of sampling \(c_s\) and transmission \(c_\tau\).

The original objective is to minimize the long-term average MSE:
\[
\text{mse}_{\text{opt}} \!=\! \min_{\pi \in \Pi} \lim_{N \to \infty} \frac{1}{N} \sum_{n=1}^{N} \mathbb{E} \left[ \int_{D_n}^{D_{n+1}} \!\!\!(W_t - W_0)^2 \, dt \!+\! c_s k_n \!+\! c_\tau \right]
\]

To introduce the Lagrange multiplier \(\lambda\), we add a penalty term for the transmission delays and the waiting times. This helps balance the estimation error and the associated costs. The augmented objective function becomes:
\[
J(\lambda) \!=\! \min_{\pi \in \Pi} \lim_{N \to \infty} \frac{1}{N} \sum_{n=1}^{N} C_n(\lambda).
\]
Where $C_n(\lambda)$ is the expected cost in one delivery epoch. Let \(\lambda^*\) be the value of \(\lambda\) that minimizes the augmented objective function. The goal is to find \(\lambda^*\) such that:
\[
J(\lambda^*)\!=\! \min_{\pi \in \Pi} \lim_{N \to \infty} \frac{1}{N} \sum_{n=1}^{N} C_n(\lambda^*)
\]

Since we are considering an infinite horizon, we can rewrite the objective function as an infinite sum:
\[
\sum_{n=1}^{\infty} \mathbb{E} \left[ \int_{D_n}^{D_{n+1}} (W_t - W_0)^2 \, dt \!+\! c_s k_n \!+\! c_\tau \!-\! \lambda^* (Y_n \!+\! T_n) \right] \!\geq\! 0.
\]

This ensures that the policy \(\pi\) chosen does not violate the constraint imposed by the Lagrange multiplier \(\lambda^*\). The existence of such a \(\lambda^*\) implies that there exists a balance point where the long-term average cost is minimized while ensuring that the policy adheres to the constraint. The inequality \(\sum_{n=1}^{\infty} \mathbb{E} [\cdot] \geq 0\) guarantees the feasibility of the solution.




\ifCLASSOPTIONcaptionsoff
  \newpage
\fi



%

\bibliographystyle{IEEEtran}  
\bibliography{main}   

\begin{thebibliography}{10}
\providecommand{\url}[1]{#1}
\csname url@samestyle\endcsname
\providecommand{\newblock}{\relax}
\providecommand{\bibinfo}[2]{#2}
\providecommand{\BIBentrySTDinterwordspacing}{\spaceskip=0pt\relax}
\providecommand{\BIBentryALTinterwordstretchfactor}{4}
\providecommand{\BIBentryALTinterwordspacing}{\spaceskip=\fontdimen2\font plus
\BIBentryALTinterwordstretchfactor\fontdimen3\font minus \fontdimen4\font\relax}
\providecommand{\BIBforeignlanguage}[2]{{%
\expandafter\ifx\csname l@#1\endcsname\relax
\typeout{** WARNING: IEEEtran.bst: No hyphenation pattern has been}%
\typeout{** loaded for the language `#1'. Using the pattern for}%
\typeout{** the default language instead.}%
\else
\language=\csname l@#1\endcsname
\fi
#2}}
\providecommand{\BIBdecl}{\relax}
\BIBdecl

\bibitem{Ari2023}
\BIBentryALTinterwordspacing
C.~Ari, M.~K.~C. Shisher, E.~Uysal, and Y.~Sun, ``Goal-oriented communications for remote inference under two-way delay with memory,'' \emph{arXiv preprint arXiv:2311.11143}, 2023. [Online]. Available: \url{https://doi.org/10.48550/arXiv.2311.11143}
\BIBentrySTDinterwordspacing

\bibitem{Eapen2023}
N.~Eapen, K.~Harsha, and A.~Kesan, ``Energy intelligence: The smart grid perspective,'' in \emph{AI-Powered IoT in the Energy Industry}, ser. Power Systems, S.~Vijayalakshmi, S., B.~Balusamy, and R.~Dhanaraj, Eds.\hskip 1em plus 0.5em minus 0.4em\relax Springer, Cham, 2023.

\bibitem{Elif-remote-est}
Y.~Sun, Y.~Polyanskiy, and E.~Uysal, ``Sampling of the wiener process for remote estimation over a channel with random delay,'' \emph{IEEE Transactions on Information Theory}, vol.~66, no.~2, pp. 1118--1135, 2020.

\bibitem{Tang2023}
H.~Tang, Y.~Chen, J.~Wang, P.~Yang, and L.~Tassiulas, ``Age optimal sampling under unknown delay statistics,'' \emph{IEEE Transactions on Information Theory}, vol.~69, no.~2, pp. 1295--1314, Feb. 2023.

\bibitem{Rabi2012}
M.~Rabi, G.~V. Moustakides, and J.~S. Baras, ``Adaptive sampling for linear state estimation,'' \emph{SIAM Journal on Control and Optimization}, vol.~50, no.~2, pp. 672--702, 2012.

\bibitem{tang2022sampling}
\BIBentryALTinterwordspacing
H.~Tang, Y.~Sun, and L.~Tassiulas, ``Sampling of the wiener process for remote estimation over a channel with unknown delay statistics,'' in \emph{Proceedings of the Twenty-Third International Symposium on Theory, Algorithmic Foundations, and Protocol Design for Mobile Networks and Mobile Computing (MobiHoc '22)}.\hskip 1em plus 0.5em minus 0.4em\relax ACM, Oct 2022, pp. 51--60, published: 03 October 2022. [Online]. Available: \url{https://doi.org/10.1145/3492866.3549732}
\BIBentrySTDinterwordspacing

\bibitem{OU-process}
T.~Z. Ornee and Y.~Sun, ``Sampling and remote estimation for the ornstein-uhlenbeck process through queues: Age of information and beyond,'' \emph{IEEE/ACM Transactions on Networking}, vol.~29, no.~5, pp. 1962--1975, 2021.

\bibitem{uysal2024}
\BIBentryALTinterwordspacing
E.~Uysal, ``Goal-oriented communications for interplanetary and non-terrestrial networks,'' \emph{arXiv preprint arXiv:}, 2024. [Online]. Available: \url{https://doi.org/}
\BIBentrySTDinterwordspacing

\bibitem{gao2015optimal}
X.~Gao, E.~Akyol, and T.~Başar, ``Optimal estimation with limited measurements and noisy communication,'' in \emph{IEEE Conference on Decision and Control (CDC)}, 2015.

\bibitem{gao2016remote}
------, ``On remote estimation with multiple communication channels,'' in \emph{American Control Conference (ACC)}, 2016, pp. 5425--5430.

\bibitem{imer2010optimal}
O.~C. Imer and T.~Başar, ``Optimal estimation with limited measurements,'' \emph{International Journal of Systems Control and Communications}, vol.~2, no. 1-3, pp. 5--29, 2010.

\bibitem{nar2014sampling}
K.~Nar and T.~Başar, ``Sampling multidimensional wiener processes,'' in \emph{IEEE Conference on Decision and Control (CDC)}, Dec 2014, pp. 3426--3431.

\bibitem{chakravorty2015distortion}
J.~Chakravorty and A.~Mahajan, ``Distortion-transmission trade-off in real-time transmission of gauss-markov sources,'' in \emph{IEEE International Symposium on Information Theory (ISIT)}, 2015.

\bibitem{lipsa2011remote}
G.~M. Lipsa and N.~C. Martins, ``Remote state estimation with communication costs for first-order lti systems,'' \emph{IEEE Transactions on Automatic Control}, vol.~56, no.~9, pp. 2013--2025, Sept 2011.

\bibitem{nayyar2013optimal}
A.~Nayyar, T.~Başar, D.~Teneketzis, and V.~V. Veeravalli, ``Optimal strategies for communication and remote estimation with an energy harvesting sensor,'' \emph{IEEE Transactions on Automatic Control}, vol.~58, no.~9, 2013.

\bibitem{poor-detection}
H.~V. Poor, \emph{An Introduction to Signal Detection and Estimation}, 2nd~ed.\hskip 1em plus 0.5em minus 0.4em\relax New York, NY, USA: Springer-Verlag New York, Inc., 1994.

\bibitem{ross-stochastic}
S.~M. Ross, \emph{Stochastic Processes}, 2nd~ed.\hskip 1em plus 0.5em minus 0.4em\relax John Wiley \& Sons, Inc., 1996.

\bibitem{haas-petrinets}
P.~J. Haas, \emph{Stochastic Petri Nets: Modelling, Stability, Simulation}, ser. Springer Series in Operations Research and Financial Engineering.\hskip 1em plus 0.5em minus 0.4em\relax Springer, 2002.

\bibitem{Puterman2005}
M.~L. Puterman, \emph{Markov Decision Processes: Discrete Stochastic Dynamic Programming}, 2nd~ed.\hskip 1em plus 0.5em minus 0.4em\relax New York: Wiley-Interscience, 2005.

\end{thebibliography}

%

\begin{IEEEbiography}[{\includegraphics[width=1in,height=1.25in,clip,keepaspectratio]{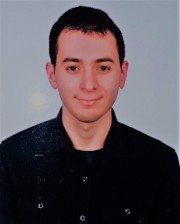}}]{Orhan T. Yavascan}
received the B.S. degree in electrical and electronics engineering from Bilkent University in 2019. He is currently pursuing the M.S. degree with Middle East Technical University (METU). His current research interests include communication networks and information freshness.
\end{IEEEbiography}

\begin{IEEEbiography}
[{\includegraphics[width=1in,height=1.25in,clip,keepaspectratio]{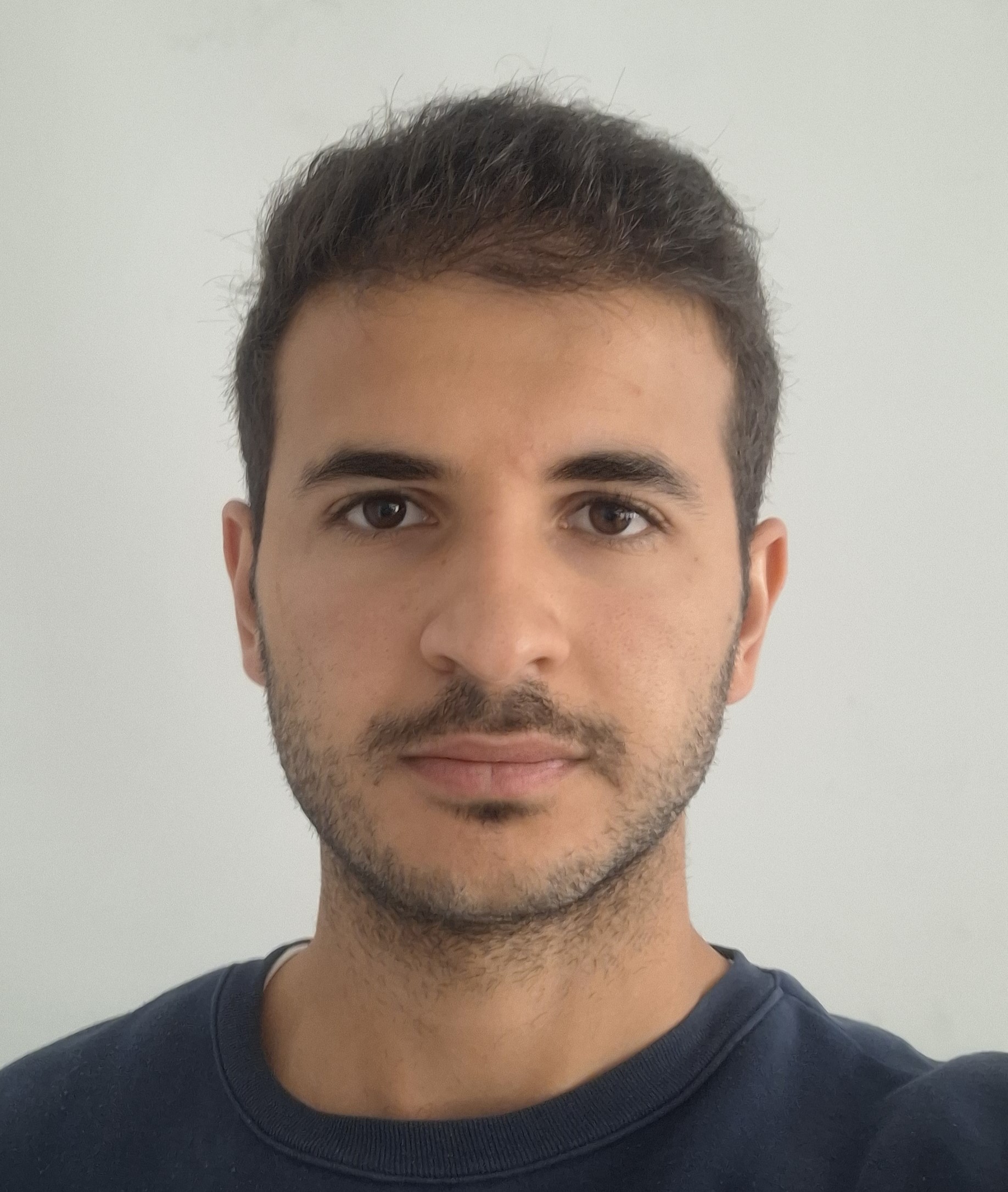}}]
{Süleyman Çıtır}
received the B.S. degree in Electrical and Electronics Engineering from Middle East Techincal University (METU) in 2023. He is currently pursuing the M.S. degree with METU. His current research interests include non-terrestrial networks.
\end{IEEEbiography}

\vspace{0em}

\begin{IEEEbiography}
[{\includegraphics[width=1in,height=1.25in,clip,keepaspectratio]{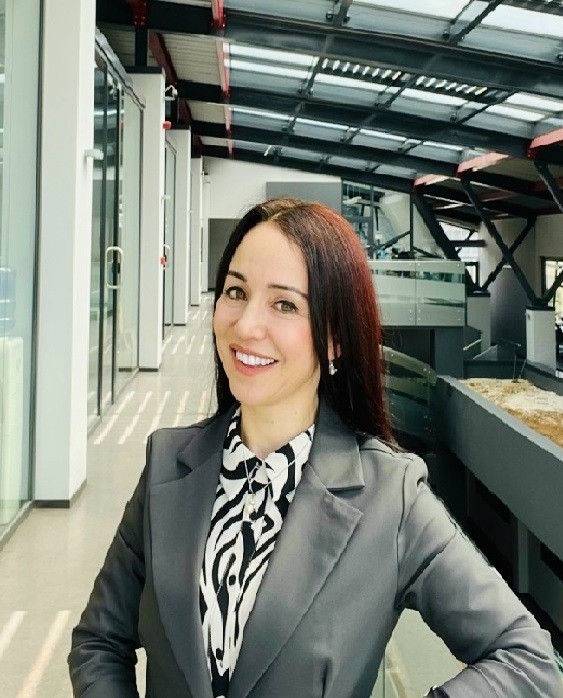}}]
{Elif Uysal}
(S’95-M’03-SM’13-F'24) is a Professor in the Department of Electrical
and Electronics Engineering at the Middle East Technical University (METU),
in Ankara, Turkey. She received the Ph.D. degree in EE from Stanford
University in 2003, the S.M. degree in EECS from the Massachusetts Institute
of Technology (MIT) in 1999 and the B.S. degree from METU in 1997.
From 2003-05 she was a lecturer at MIT, and from 2005-06 she was an
Assistant Professor at the Ohio State University (OSU). Since 2006, she has been with METU, and held visiting positions at OSU and MIT during 2014-2016. Her research interests are at the junction of communication and networking theories, with recent application to Space and Interplanetary
Networking. Dr. Uysal is a recipient of a 2024 ERC Advanced Grant, 2014 Young Scientist Award from the Science Academy of Turkey, an IBM Faculty Award (2010), the Turkish National Science Foundation Career Award (2006), an NSF Foundations on Communication research grant (2006-2010), the MIT Vinton Hayes Fellowship, and the Stanford Graduate Fellowship. She has served as associate editor for 
the IEEE/ACM Transactions on Networking, and the IEEE Transactions on Wireless Communication. 
\end{IEEEbiography}




\end{document}